\algnewcommand\And{\textbf{and}}
\algnewcommand\Or{\textbf{or}}
\algnewcommand\Not{\textbf{not}}
\algnewcommand\In{\textbf{in}}
\algnewcommand\Each{\textbf{each}}
\newtheorem{theorem}{Theorem}[section]          
\newtheorem{lemma}[theorem]{Lemma}             
\newcommand{\squishlist}{
 \begin{list}{$\bullet$}
  { \setlength{\itemsep}{0pt}
     \setlength{\parsep}{3pt}
     \setlength{\topsep}{3pt}
     \setlength{\partopsep}{0pt}
     \setlength{\leftmargin}{2.5em}
     \setlength{\labelwidth}{1em}
     \setlength{\labelsep}{0.5em} } }
\newcommand{\squishlisttwo}{
 \begin{list}{$\triangleright$}
  { \setlength{\itemsep}{0pt}
     \setlength{\parsep}{0pt}
    \setlength{\topsep}{0pt}
    \setlength{\partopsep}{0pt}
    \setlength{\leftmargin}{2em}
    \setlength{\labelwidth}{1.5em}
    \setlength{\labelsep}{0.5em} } }
\newcommand{\squishend}{
  \end{list}  }
\definecolor{verbgray}{gray}{0.9}
\definecolor{shadecolor}{rgb}{.91, .91, .91}
\definecolor{bordercolor}{rgb}{.8, .8, .6}
\definecolor{lipicsGray}{rgb}{0.31,0.31,0.33}
\definecolor{lipicsBulletGray}{rgb}{0.60,0.60,0.61}
\definecolor{lipicsLineGray}{rgb}{0.51,0.50,0.52}
\definecolor{lipicsLightGray}{rgb}{0.85,0.85,0.86}
\definecolor{lipicsYellow}{rgb}{0.99,0.78,0.07}
\definecolor{ultramarine}{rgb}{0, 0.125, 0.376}
 \definecolor{arsenic}{rgb}{0.23, 0.27, 0.29}
 \definecolor{beige}{rgb}{0.96, 0.96, 0.86}
\definecolor{amber}{rgb}{1.0, 0.75, 0.0}
\definecolor{orange}{rgb}{1.0, 0.49, 0.0}
\definecolor{dandelion}{rgb}{0.94, 0.88, 0.19}
  \definecolor{indiagreen}{rgb}{0.07, 0.53, 0.03}
  \definecolor{huntergreen}{rgb}{0.21, 0.37, 0.23}
\newcommand{\blue}[1] {\textcolor{blue}{#1}}
\newcommand{\red}[1] {\textcolor{red}{#1}}
\newcommand{\defo}[1] {\emph{\textcolor{blue}{#1}}}
\definecolor{shadecolor}{rgb}{.9, .9, .9}
    \newenvironment{frshaded*}{%
    \MakeFramed {\advance\hsize-\width \FrameRestore}}%
    {\endMakeFramed}
    \newcounter{examplecounter}
\newenvironment{exam}{
 \begin{frshaded*}
    \refstepcounter{examplecounter}%
    \noindent
  \textbf{Example \arabic{examplecounter}}%
  \quad
}{%
\end{frshaded*}
}
\newenvironment{frshaded2*}{%
    \MakeFramed {\advance\hsize-\width \FrameRestore}}%
    {\endMakeFramed}
\newenvironment{result}{
 \begin{frshaded2*}
}{%
\end{frshaded2*}

}
\newenvironment{frshaded3*}{%
    \MakeFramed {\advance\hsize-\width \FrameRestore}}%
    {\endMakeFramed}
\definecolor{winered}{rgb}{0.5,0.2,0}
\renewcommand{\S} {\mathbf{S}}
\newcommand{\edit}[1]{\textcolor{black}{#1}}
\begin{document}


\title{ Las Vegas algorithms to generate universal cycles and de Bruijn sequences uniformly at random} 

\author{Joe Sawada}
\author{Daniel Gabrić}

\keywords{Las Vegas algorithm, universal cycle, de Bruijn sequence, weak order, subsets, permutations, orientable sequence}


\begin{abstract}
We present practical algorithms for generating universal cycles uniformly at random.  In particular, we consider universal cycles for shorthand permutations, subsets and multiset permutations, weak orders, and orientable sequences. 
Additionally, we consider de Bruijn sequences, weight-range de Bruin sequences, and de Bruijn sequences, with forbidden $0^z$ substring.  Each algorithm, seeded with a random element from the given set, applies a random walk of an underlying Eulerian de Bruijn graph to obtain a random arborescence (spanning in-tree).
Given the random arborescence and the de Bruijn graph, a corresponding random universal cycle can be generated in constant time per symbol.  
We present experimental results on the average cover time needed to compute a random arborescence for each object using a Las Vegas algorithm.
\end{abstract}

\maketitle

\section{Introduction} \label{sec:intro}

Let $\mathbf{\Sigma}_k(n)$ denote the set of all strings of length $n$ over the alphabet $\{0,1, \ldots, k{-}1\}$.  Let $\S$ denote a subset of $\mathbf{\Sigma}_k(n)$.
\edit{The \defo{de Bruijn graph} of ${\bf S}$, denoted $G(\mathbf{S})$, is the directed graph where each vertex corresponds to a length-($n{-}1$) prefix or a length-($n{-}1$) suffix of a string in ${\bf S}$; for each string  $u_1u_2\cdots u_n$ in $\mathbf{S}$
there is a directed edge labeled $u_n$ from vertex $u=u_1u_2\cdots u_{n-1}$ to vertex $v = u_2u_2\cdots u_n$.  }
For example, see Figure~\ref{fig:shiftG}. 

%

\begin{figure}[h]
    \centering
    \includegraphics[width=0.9\linewidth]{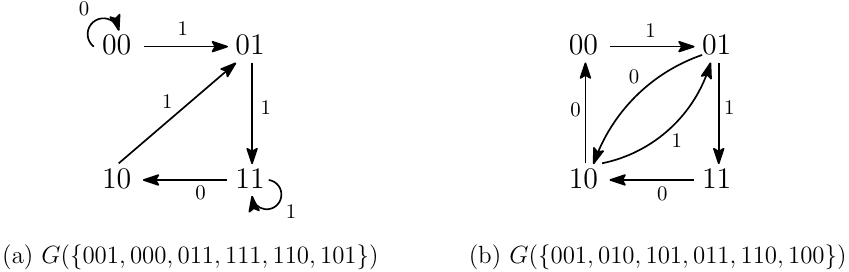}
    \caption{Two de Bruijn graphs. The graph in (b) is Eulerian. } 
    \label{fig:shiftG}
\end{figure}

A \defo{universal cycle} for $\S$, is a cyclic string of length $|\S|$ that contains each string in $\S$ as a substring exactly once (including the wraparound); they exist if and only if $G(\S)$ is Eulerian, that is, $G(\S)$ contains an Euler cycle. For example, consider $\mathbf{S} = \{001, 010, 101, 011, 110, 100\}$ and the de Bruijn graph $G(\mathbf{S})$ illustrated  in Figure~\ref{fig:shiftG}(b): the Euler cycle $00, 01, 11, 10, 01, 10, 00$ corresponds to the universal cycle $110100$ obtained by outputting the labels on the edges in the Euler cycle.  In this paper we are concerned with interesting subsets $\S$ whose underlying de Bruijn graph is Eulerian, i.e., $\S$ admits a universal cycle.  In particular, we consider:
\begin{enumerate}
    \item (shorthand) permutations, subsets, and permutations of a multiset,
    \item weak orders, 
    \item $k$-ary strings of length $n$ (which yield de Bruijn sequences),
    \item $k$-ary strings of length $n$  that do not contain $0^z$ (cyclically),
    \item$k$-ary strings of length $n$ with weight in the range $[a,b]$, and
    \item $k$-ary strings of length $n$ that produce asymptotically optimal orientable sequences, 
\end{enumerate}
where the \defo{weight} of a string is the sum of its symbols, and $[a,b]$ denotes the set of integers $\{a,a{+}1, \ldots, b\}$.

The primary objective of this paper is to describe practical algorithms to generate universal cycles for these objects \emph{uniformly at random}, while using exponential space to store an underlying de Bruijn graph.  \edit{We recall a generic Las Vegas algorithm from~\cite{KMUW96} that generates a random Euler cycle in any directed Eulerian graph by first generating a random arborescence (spanning in-tree).}  For each set $\S$, we generate a random string in $\S$ in order to seed the algorithm by selecting a root for the arborescence.  The generic algorithm is presented in Section~\ref{sec:algo}. Then, in Section~\ref{sec:app} we consider each of the aforementioned sets $\S$ and provide (i) a discussion of how to generate a random element from $\S$ and (ii) experimental evidence for the average cover time to compute a random arborescence in $G(\S)$.  

\medskip
\noindent
{\bf Motivation.}~~
This paper is motivated by a recent result from Lipt\'{a}k and Parmigiani~\cite{LP24} that generates random de Bruijn sequences, although not uniformly at random (see  Section~\ref{sec:DB}).  In that paper, they compared their approach with an implementation of Fleury’s algorithm~\cite{fleury} to generate Euler cycles, modified by adding randomization. That implementation could not generate all possible de Bruijn sequences, however, it served ``as the closest available method for comparison''~\cite{LP24}.   
Indeed, despite the vast literature on universal cycle constructions, and in particular, de Bruijn sequences, we also found no detailed discussion or resource regarding the generation of these sequences uniformly at random, other than a passing comment by Propp and Wilson in~\cite[p.172]{wilson2}.  
\edit{However, it is well known that a random arborescence in a directed Eulerian graph can be used to generate a random Euler cycle ~\cite{KMUW96}. }


\section{Preliminaries} \label{sec:prelim}

Let $G = (V,E)$ denote a directed graph consisting of a non-empty set of vertices $V$ and an edge set $E$ consisting of ordered pairs of elements in $V$.  A \defo{walk} in $G$ is a sequence of vertices $v_1,v_2, \ldots, v_j$ such that $(v_i, v_{i+1}) \in E$ for all $i$ in $\{1,2, \ldots, j{-}1\}$.  A \defo{reverse walk} is a sequence of vertices $v_1,v_2, \ldots, v_j$ such that $(v_{i+1}, v_{i}) \in E$ for all $i$ in $\{1,2, \ldots, j{-}1\}$.  Let $G$ be represented by a standard adjacency list representation.  We define a \defo{traversal} to be a walk starting from some vertex $v$ that follows the rule: at each vertex $u$ , the next vertex corresponds to the first unused edge on $u$'s adjacency list; the traversal terminates when it reaches a vertex whose adjacency list has been exhausted.  

\begin{exam} \label{exam:graph} \small
Consider the directed graph $G=(V,E)$ where $V = \{u,v,w\}$ and $E = \{(u,u), (u,v), (v,w), (w,u)\}$.  Then $u,v,w$ is a walk in $G$, and $w,v,u$ is a reverse walk in $G$.  Give the adjacency list representation $u \rightarrow u,v$, $v \rightarrow w$, and $w \rightarrow u$, the traversal starting at $u$ is the walk $u,u,v,w,u$ using all four edges.  It corresponds to an Euler cycle in $G$.
\end{exam}

A traversal that starts and ends with the same vertex, say $r$, and visits all the edges in $E$ generates an Euler cycle, i.e., the traversal does not ``burn bridges''~\cite{fleury}. This means that the $|V|-1$ edges corresponding to the last edges on each adjacency list, except for $r$'s, form an arborescence  (spanning in-tree) rooted at $r$.  Using this well-known fact, all Euler cycles can be generated as follows:

\begin{result}  
\noindent {\bf Generate all Euler Cycles for a directed graph $G$}  \small

\smallskip

\begin{itemize}
    \item[(1)]  Generate all arborescences $\mathcal{T}$ for each possible root $r \in V$.
    \item[(2)]  For each $\mathcal{T}$ generated in step (1), take each edge $(u,v)$ in $\mathcal{T}$, and set the vertex $v$ to be the last on $u$'s adjacency list; then generate all possible orderings for the remaining vertices on each adjacency list.  
    \item[(3)]  For each set of adjacency lists generated in step (2), generate a traversal of $G$ starting at the corresponding root $r$. 
 \end{itemize}   
\end{result}
It is important to note that the above algorithm will generate all Euler cycles in $G$ exactly once, where the starting edge in each cycle is important.  By fixing a single root $r$ at step (1), we generate all Euler cycles up to equivalence when the edges are considered to be a circular list of edges; the starting edge in the cycle is immaterial.  However, note that the same equivalent cycle could be generated twice.  For instance, consider the graph in Example~\ref{exam:graph} with root vertex $u$. The algorithm produces two sets of edge listings at step (2): one where $u \rightarrow u,v$, and one where $u \rightarrow v,u$.  The two listings produce equivalent Euler cycles, namely $u,u,v,w,u$ and $u,v,w,u,u$.

\section{Random Generation} \label{sec:algo}

In this section the algorithm outlined in Section~\ref{sec:prelim} to generate all possible Euler cycles in a directed graph is applied to generate a single universal cycle \emph{uniformly at random} for a set $\mathbf{S}$ with an underlying Eulerian de Bruijn graph $G(\S)$. The Las Vegas algorithm describe by Algorithm R below summarizes the approach from Kandel et al.~\cite{KMUW96}, which extends the work from~\cite{alts}.  
We note one difference in our presentation. The algorithm in~\cite{KMUW96}, takes as input a (cyclic) sequence $\mathcal{S}$ that may contain duplicate length $n$ substrings.  From this sequence, it constructs a de Bruijn graph that allows for multiple edges between vertices.  The algorithm is then initialized by selecting a random rotation of $\mathcal{S}$, which effectively generates a random edge in the underlying graph.  For our purposes, we do not have an initial sequence $\mathcal{S}$. Instead, for each set $\mathbf{S}$ considered in  Section~\ref{sec:app}, we present an efficient algorithm to compute a random element in $\mathbf{S}$, which corresponds to a random edge in $G(\mathbf{S})$.


\begin{result}  
\noindent {\bf Algorithm R} \small

\smallskip

\noindent
Generate a universal cycle for set $\S$ uniformly at random given the underlying (Eulerian) de Bruijn graph $G(\S)$:
\begin{enumerate}
       \item  Generate a random edge ($r,v$) in $G(\S)$, i.e., a string in $\mathbf{S}$, to obtain a random root vertex $r$ 
        \item Generate a random arborescence $T$ directed to root $r$
       \item Make each edge of $T$ (the bridges) the last edge on the adjacency list of the corresponding outgoing vertex (the root does not have such a bridge), then randomly assign the order of the remaining outgoing edges 
       \item Starting from $r$, perform a traversal of $G(\S)$, outputting the label on each edge as it is visited. 
       \end{enumerate}       
\end{result}

Selecting a random vertex instead of an edge at step (1) of Algorithm R will not lead to Euler cycle generated \emph{uniformly} at random if $G(\S)$ is not regular, as illustrated in the following example.
\begin{exam} \small
Consider $\mathbf{S} = \{001, 010, 101, 011, 110, 100\}$ and its corresponding de Bruijn graph $G(\mathbf{S})$ illustrated in Figure~\ref{fig:shiftG}(b).  Every vertex in $G(\mathbf{S})$ roots two unique spanning arborescences.  For each tree rooted at $r=000$ there is one corresponding edge labeling; however, each tree rooted at $r=010$ has two edge labelings since the there are two possible adjacency lists for the root.  Thus, when $r=000$, there are two possible universal cycles, while if $r=010$ there are 4 possible universal cycles that could get generated.  Thus, if a random vertex instead of a random edge is chosen at step (1) in Algorithm R, a universal cycle generated from $r=000$ will occur twice as frequently as a universal cycle rooted at $r=010$.
\end{exam}

The following example highlights the steps from Algorithm R.

\begin{exam} \small
Consider the set $\S$ consisting of all binary strings of length $n=6$ with weight in the range [1,2]. The de Bruijn graph $G(\S)$ is shown below with the randomly selected edge $\red{000100}$ from step (1) of Algorithm R.

  \begin{center}
    \includegraphics[width=0.6\linewidth]{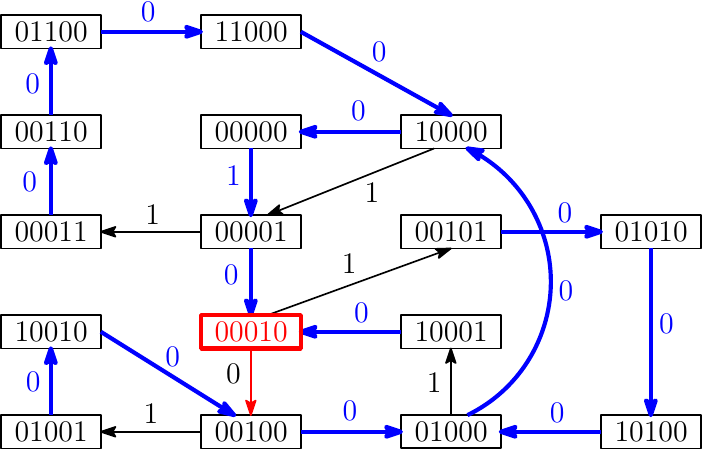}
  \end{center}

\noindent
The graph illustrates a random arborescence $T$ rooted at $r=\red{00010}$. Based on $T$, the ordering of the adjacency lists for the vertices following step (3) in Algorithm R is as follows, where the adjacency list of the root $00010$ is randomly selected as 00100,00101: 
\medskip
\begin{center}
\begin{tabular}{cc l} 
  00000 &$\rightarrow$  & \blue{00001} \\
  00001 &$\rightarrow$  & 00011, \blue{00010}  \\
  \red{00010} &$\rightarrow$  & 00100, 00101    \\
  00100& $\rightarrow$  & 01001, \blue{01000}    \\
  01000& $\rightarrow$  & 10001, \blue{10000}    \\ 
  10000 &$\rightarrow$  & 00001, \blue{00000}   \\
  00011& $\rightarrow$  & \blue{00110}  \\
  00101 & $\rightarrow$   & \blue{01010}    \\
 \end{tabular} \ \ \ \ 
 \begin{tabular}{cc l} 
  00110 &$\rightarrow$  & \blue{01100}   \\
  01001& $\rightarrow$  & \blue{10010}    \\
  01010& $\rightarrow$  & \blue{10100}    \\ 
  01100& $\rightarrow$  & \blue{11000}   \\
  10001& $\rightarrow$  & \blue{00010}    \\
  10010& $\rightarrow$  & \blue{00100}    \\
  10100 &$\rightarrow$  & \blue{01000}    \\ 
  11000& $\rightarrow$  & \blue{10000}.   \\
\end{tabular}
\end{center}
\bigskip

\noindent
Step (4) from Algorithm R produces the following random universal cycle for $\S$:
\[ 0100010100001100\red{00010}. \]
%


    \vspace{-0.1in}
\end{exam}

For Eulerian graphs, an arborescence can be generated uniformly at random by performing a random backwards walk until every vertex is visited. The first time a vertex is visited, the edge is recorded as a \defo{tree edge} in the random arborescence~\cite{KMUW96}.  The running time of this step depends on the \defo{cover time} of the random walk, which is the number of steps it takes to visit every vertex.  
\edit{The algorithm requires $\Theta(|\S|)$ space to store the graph.}
The final step (4) of Algorithm R generates a random universal cycle for $\S$ in constant time per symbol.





Theorem 3 in~\cite{KMUW96} shows that the expected cover time of a directed Eulerian graph is at most $|V|^2|E|$. 
\edit{In the next section, however, our experiments seem to indicate that some de Bruijn graphs appear to have a significantly faster expected cover time. We do not have a proof of this observation, nor have we found a proof in the literature. }


There are algorithms to generate random arborescences that are not dependent on the expected cover time. Wilson~\cite{wilson1} shows how to produce a random arborescence in time proportional to the maximum hitting time. The maximum hitting time is the maximum over all pairs of vertices $u$ and $v$ of the expected number of steps for a random walk to travel from $u$ to $v$.
Propp and Wilson~\cite{wilson2} show how to produce a random arborescence in time proportional to the smaller of the maximum hitting time and the mean hitting time. The mean hitting time is the average over all pairs of vertices $u$ and $v$ of the expected number of steps for a random walk to travel from $u$ to $v$.


\section{Applications and experimental results}  \label{sec:app}

In this section we apply Algorithm R to generate universal cycles uniformly at random for
shorthand permutations, subsets and multiset permutations, $k$-ary strings (de Bruijn sequences), generalizations of de Bruijn sequences including those with no $0^z$ substring and those with bounded weight, and orientable sequences.

For each object, we discuss how to generate a random edge in the underlying de Bruijn graph to seed the algorithm, and present experimental evidence for the cover time required to compute a random arborescence.
Implementations of our algorithms are available at \url{http://debruijnsequence.org/db/random}.  In our implementations to compute the random arborescences, we did not pre-compute the shift-graphs, but instead used a mapping of each vertex to an integer (using a ranking algorithm or similar) to store whether a vertex had been visited.  We applied a similar strategy to generate the random universal cycle in step (4) of Algorithm R.

The results presented in this section with respect to the cover times of certain de Bruijn graphs are experimental.  We leave it as in interesting open problem to determine a tight upper bound on the expected cover time for the de Bruijn graphs being considered.

\subsection{Permutations, subsets, and multiset permutations}  \label{sec:perm}

Universal cycles do not exist, in general, for permutations and subsets.  For permutations, however, observe that the final symbol is redundant.  If
$p_1p_2\cdots p_n$ is a permutation, we say that 
$p_1p_2\cdots p_{n-1}$ is a \defo{shorthand permutation} of order $n$, where the last symbol is implied.
Let $\mathbf{SP}(n)$ be the set of all shorthand permutations of order $n$. Similarly, if $b_1b_2\cdots b_n$ is a binary string with $k$ ones (representing a $k$-subset of an $n$-set), we say that $b_1b_2\cdots b_{n-1}$ is a \defo{shorthand $k$-subset} of order $n$, where the last bit is implied.  Let $\mathbf{S}(n,k)$ be the set of all shorthand $k$-subsets of order $n$.
Multiset permutations (strings with fixed content) generalize both permutations and subsets. If $m_1m_2\cdots m_n$ is a permutation of the multiset $\{s_1, s_2, \ldots, s_n\}$, then
we say $m_1m_2\cdots m_{n-1}$ is a \defo{shorthand multiset permutation}.
When each $s_i = i$, a multiset permutation is simply a permutation, and when the multiset contains $k$ ones and $(n{-}k)$ zeros, it represents a binary string with weight $k$ representing a $k$-subset of an $n$-set.

For shorthand permutations, the underlying de Bruijn graph has $n!$ edges. Each vertex has in-degree = out-degree = $2$ and thus there are $n!/2$ vertices.  For shorthand $k$-subsets, where $k \geq 2$, the underlying de Bruijn graph has ${n \choose k}$ edges and each vertex is a binary string of length $n{-}2$ with weight $k-2$, $k-1$, or $k$; there are ${n-2 \choose k-2} + {n-2 \choose k-1} + {n-2 \choose k}$ vertices. Note that each vertex has the same in-degree as out-degree; however, this value may be either $1$ or $2$. For example, if $n=5$ and $k=2$, the vertex $001$ has  two incoming edges from $000$ and $100$, and outgoing edges to $010$ and $011$, while the vertex $000$ has one incoming edge from $100$ and one outgoing edge to $001$.

It is well known that a random permutation can be generated  by applying the Fisher-Yates shuffle~\cite{FisherYates}; an $O(n)$ time implementation is provided by Knuth~\cite[Algorithm P]{knuth2} based on a presentation by Durstenfeld~\cite{durstenfeld}. As noted by Arndt~\cite{arndt}, it is straightforward to apply the shuffle to generate an unbiased random multiset permutation $m_1m_2\cdots m_n$ in $\mathcal{O}(n)$ time as illustrated in Algorithm~\ref{alg:perm}.
Thus, $m_1m_2\cdots m_{n-1}$ is a random shorthand multiset permutation that can be used to seed Algorithm R for shorthand permutations, shorthand subsets, and more generally, shorthand multiset permutations.
%
\begin{algorithm} 

\small
\begin{algorithmic}
\State $m_1m_2\cdots m_n \gets s_1s_2\cdots s_n$
\For{$i$ {\bf from} $n$ {\bf down to} $2$} 
    \State $j \gets $ random integer in $[1,i]$
    \State \Call{Swap}{$m_i, m_j$}
\EndFor
\end{algorithmic}
\caption{Random generation of a permutation $m_1m_2\cdots m_n$ of the multiset $\{s_1,s_2, \ldots, s_n\}$ applying the Fisher-Yates shuffle}
\label{alg:perm}
\end{algorithm}
%

   Table~\ref{tab:perm} shows the minimum, maximum, and average ratios of the cover time to total edges in $G(\mathbf{SP}(n))$ and $G(\mathbf{S}(n,n/2))$ by running Algorithm R for 10,000 iterations.  

\begin{table}[h] 
\begin{center}
\begin{tabular} {r |  r  r   r }
	& \multicolumn{3}{c}{\footnotesize {\bf Ratio}: cover time / $n!$} \\ 
 $n$ & ~~~~~~~~~{\bf Min} & ~~~~~~~~~{\bf Max}  &~~~~~~~~ {\bf Avg}   \\  \hline
3 & 0.3  & 0.3  & 0.3 \\
 4 & 0.5 & 3.7 & 1.1 \\
 5 & 0.8 & 5.3 & 2.0\\
 6 & 1.7 & 6.8 & 3.0\\
 7 & 2.7 & 6.5 & 3.9\\
 8 & 3.9 & 8.0 & 4.9\\
 9 & 5.0 & 7.7 & 5.9\\   
 10 & 6.4 & 8.1 & 7.0 \\ 
\end{tabular} \ \ \ \ \ \ \ \  \ \ \ \ \ \ \ \ \
\begin{tabular} {r |  r  r   r }
	& \multicolumn{3}{c}{\footnotesize {\bf Ratio}: cover time / ${n \choose k}$} \\ 
 $n$ & ~~~~~~~~~{\bf Min} & ~~~~~~~~~{\bf Max}  &~~~~~~~~ {\bf Avg}   \\  \hline
10 & 1.8	&  18.9	& 5.0\\
 12 & 3.2	& 18.0 &6.3 \\
 14 &  4.6	& 16.4 &7.6 \\  
 16 &  5.9	& 18.8 & 8.8 \\ 
 18 &  7.2 & 20.3 & 10.1 \\  
 20 &  8.5 & 21.8  & 11.4  \\  
 22 & 9.7 & 21.8 & 12.7  \\  
 24 &  11.2& 22.6 & 14.1  \\    
 26 &  12.4&  24.0 & 15.4 \\    
\end{tabular}
\end{center}
\caption{The minimum, maximum, and average ratios of the cover time to total edges in the de Bruijn graphs $G(\mathbf{SP}(n))$ (left) and $G(\mathbf{S}(n,n/2))$ (right) by running Algorithm R for 10,000 iterations.}
\label{tab:perm}
\end{table}

Universal cycles for shorthand permutations can be constructed in $\mathcal{O}(1)$ amortized time per symbol using $\mathcal{O}(n)$ space~\cite{shorthand2,shorthand}.
An $\mathcal{O}(n)$-time successor rule is presented in~\cite{karyframework}, and 
an $\mathcal{O}(1)$ amortized time per symbol algorithm applying concatenation trees
is presented in~\cite{concattree} that uses $\mathcal{O}(n^2)$ space.  
Universal cycles for shorthand subsets can be constructed in $\mathcal{O}(1)$ amortized time per symbol using $\mathcal{O}(n)$ space~\cite{fixed-weight-binary-strings}.
Universal cycles for shorthand multiset permutations (strings with fixed content) can be constructed in $\mathcal{O}(1)$ amortized time per symbol using $\mathcal{O}(n)$ space~\cite{fixed-content}.

\subsection{Weak orders}  \label{sec:weak}

    A \defo{weak order} is the number of ways $n$ competitors can finish in a race if ties are allowed.  Let $\mathbf{W}(n)$ denote the set of weak orders with $n$ competitors, and let $W_n$ denote $|\mathbf{W}(n)|$.   For example, \[\mathbf{W}(3) = \{111, 113, 131, 311, 122, 212, 221, 123, 132, 213, 231, 312, 321\},\] and $W_3 = 13$.  
    The number of weak orders where there is a $k$-way tie for first is given by ${n \choose k} W_{n-k}$ for $k<n$; there is 1 weak order when $k=n$.  Thus, $W_n = \sum_{k=1}^n {n \choose k} W_{n-k}$.  

    To generate a random weak order, we will apply this recurrence to group the strings of $\mathbf{W}(n)$ based on their content.
    Let $c_i$ denote the number of occurrences of the symbol $i$ in $\omega = w_1w_2\cdots w_n$. Since every weak order contains 1 as its smallest symbol and the largest symbol possible is $n$, we say that $(c_1, c_2, \ldots, c_n)$ is the \defo{content} (also known as the Parikh vector) of $\omega$.  For example, if $n=3$ then each weak order has content $(1,1,1), (1,2,0), (2,0,1),$ or $(3,0,0)$.  If we partition $\mathbf{W}(n)$ into subsets based on their content, we can order the subsets based on the lexicographic ordering of the corresponding content. 
    The weak orders with exactly one $1$ comes first, followed by those with exactly two $1$s and so on. 
    %
    Thus, to generate a random weak ordering in $\mathbf{W}(n)$, we can (i) select a random integer $r$ in $[1,W_n]$, (ii) apply the recurrence to determine the content $(c_1,c_2, \ldots, c_n)$ based on the described ordering, and (iii) apply a Fisher-Yates shuffle to obtain a random multiset permutation (weak order) with content $(c_1,c_2, \ldots, c_n)$ (see Section~\ref{sec:perm}).  Details are provided in Algorithm~\ref{alg:weak}.  Note that the integer $r$ does not uniquely determine the weak order being generated, however, it is possible to obtain this property by applying an unranking algorithm for multiset permutations instead of applying the shuffle. 
    


    %
\begin{algorithm}
\small
    \begin{algorithmic}

    \State $(c_1,c_2, \ldots, c_n) \gets (0,0, \ldots 0)$
    \State $t \gets n$
    \State $ v \gets 1$
    \State $r \gets$ random integer in $[1,W_n]$ 
    \While{$t\geq 1$} 
    \State \blue{\mbox{$\triangleright$ ~Determine $c_v$ applying the recurrence for $W_t$}}
 
        \For{$j$ {\bf from} $1$ {\bf to} $t$} 
            \State $p_j \gets {t \choose j} W_{t-j}$
            \If{$r \leq p_j$} {\bf break} \EndIf
            \State $r \gets r -p_j$
        \EndFor
        \State $c_v \gets j$
        \State $v \gets v+j$
        \State $t \gets t-j$
     \EndWhile
    \Statex
     \State \blue{\mbox{$\triangleright$ ~Apply Fisher-Yates shuffle to generate a random multiset permutation}}
     \State $\omega \gets$ a random multiset permutation with content $(c_1,c_2,\ldots, c_n)$

        
  
\end{algorithmic}
\normalsize
\caption{Random generation of a weak order $\omega$ from $\mathbf{W}(n)$.}
\label{alg:weak}
\end{algorithm}

    \begin{lemma}
    Algorithm~\ref{alg:weak} can generate a weak order $w_1w_2\cdots w_n$ uniformly at random using $\mathcal{O}(n^2)$ simple operations on numbers up to $W_n$.
    \end{lemma}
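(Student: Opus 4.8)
The plan is to split the statement into a \emph{correctness} half (the output is uniform over $\mathbf{W}(n)$) and a \emph{complexity} half ($\mathcal{O}(n^2)$ operations on integers bounded by $W_n$), handling correctness first since it is the substantive part. First I would record the structural dictionary behind the algorithm: a weak order $\omega$ is determined by its content $(c_1,\dots,c_n)$ together with the arrangement of the tied competitors, and the nonzero entries of the content sit exactly at positions $1,\,1{+}c_1,\,1{+}c_1{+}c_{1+c_1},\dots$, so a content is literally the same datum as a composition $(j_1,\dots,j_m)$ of $n$ into positive parts (the successive tie-group sizes) --- precisely the sequence the \textbf{while} loop of Algorithm~\ref{alg:weak} produces. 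The number of weak orders sharing a fixed content is the multinomial coefficient $n!/\prod_i j_i!$, and summing these over all compositions recovers $W_n$, which is the combinatorial content of $W_n=\sum_{k=1}^n\binom{n}{k}W_{n-k}$.

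Next I would analyze the content-selection loop. The recurrence partitions the $W_t$ weak orders on $t$ competitors into blocks indexed by the size $j$ of the first tie-group, the $j$-th block having size $p_j=\binom{t}{j}W_{t-j}$; hence locating $r$ among the $p_j$ selects a first part $j$ with probability $\binom{t}{j}W_{t-j}/W_t$. The key computation is that the conditional probabilities along one run of the loop telescope: writing $t_1=n$ and $t_{\ell+1}=t_\ell-j_\ell$, the product $\prod_\ell \binom{t_\ell}{j_\ell}W_{t_{\ell+1}}/W_{t_\ell}$ collapses, the $W$-factors cancelling to $W_0/W_n=1/W_n$ and the binomials multiplying (via $\binom{t_\ell}{j_\ell}=t_\ell!/(j_\ell!\,t_{\ell+1}!)$) to $n!/\prod_\ell j_\ell!$. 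Thus the loop selects the composition $(j_1,\dots,j_m)$, equivalently the content, with probability exactly $(n!/\prod_i j_i!)/W_n$, i.e. proportional to the number of weak orders realizing that content. Combined with the fact that the Fisher-Yates shuffle invoked in the last line (unbiased over arrangements of a fixed content, per Section~\ref{sec:perm}) returns each such arrangement with probability $\prod_i j_i!/n!$, the two factors multiply to $1/W_n$ for every weak order, giving uniformity.

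The main obstacle is that all of these conditional probabilities are extracted from a \emph{single} integer $r$, so the crux is to verify the invariant that, after peeling a part of size $j$, the residual is uniformly distributed over $\{1,\dots,W_{t-j}\}$ for the reduced instance. This is exactly the point where the block size $p_j=\binom{t}{j}W_{t-j}$ must be reconciled with the sub-problem count $W_{t-j}$: the $\binom{t}{j}$ choices of \emph{which} competitors occupy the current rank are irrelevant to the content and must be factored out of the residual before recursing, so that the telescoping above holds rank-by-rank (as an exact partition of $\{1,\dots,W_n\}$) rather than merely in expectation. I would isolate this residual-uniformity as a standalone lemma proved by induction on $t$, and let the uniformity statement follow from it.

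Finally, for complexity, the \textbf{while} loop runs at most $n$ times (each iteration decreases $t$ by $j\ge 1$) and, since the inner \textbf{for} loop breaks at $j=j_\ell$, performs $\sum_\ell j_\ell=n$ inner steps in total, each an $\mathcal{O}(1)$ comparison and subtraction on integers at most $W_n$; the shuffle is $\mathcal{O}(n)$. The $\mathcal{O}(n^2)$ bound therefore comes from precomputation: building Pascal's triangle up to row $n$ and the table $W_0,\dots,W_n$ from the recurrence each cost $\mathcal{O}(n^2)$ additions and multiplications on integers bounded by $W_n$ (note $\binom{t}{j}\le W_n$ and $p_j\le W_t\le W_n$). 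Summing these contributions yields $\mathcal{O}(n^2)$ simple operations on numbers up to $W_n$, completing the plan.
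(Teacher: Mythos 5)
Your complexity half coincides with the paper's proof, which is in fact the paper's \emph{entire} proof: precompute $W_0,\dots,W_n$ and the needed binomial coefficients via Pascal's identity in $\mathcal{O}(n^2)$ simple operations on numbers up to $W_n$, and note that the \textbf{while} loop iterates at most $n$ times with $\mathcal{O}(n)$ work per iteration. The paper offers no argument at all for uniformity, so your correctness half goes beyond it, and your telescoping identity $\prod_\ell \binom{t_\ell}{j_\ell} W_{t_{\ell+1}}/W_{t_\ell} = \bigl(n!/\prod_i j_i!\bigr)/W_n$, combined with the unbiasedness of the Fisher--Yates shuffle on a fixed content, is exactly the right skeleton for such an argument.

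However, the standalone lemma you defer to --- that after peeling a part of size $j$ the residual $r$ is uniform on $\{1,\dots,W_{t-j}\}$ --- is false for Algorithm~\ref{alg:weak} as printed, and no induction can establish it. Conditioned on the inner loop breaking at $j$, the residual is uniform on $\{1,\dots,\binom{t}{j}W_{t-j}\}$, and nothing in the pseudocode divides out the factor $\binom{t}{j}$ that you correctly say ``must be factored out of the residual before recursing'': the next iteration's blocks sum only to $W_{t-j}$, so every residual exceeding $W_{t-j}$ runs off the end of the inner \textbf{for} loop and is absorbed by its last part (and under a C-style convention where the loop variable exits as $t{+}1$, the output content is outright invalid). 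Concretely, for $n=3$ with $W_3=13$: content $(1,1,1)$ arises only from $r\in\{1,2\}$ while $(1,2,0)$ absorbs $r\in\{3,\dots,9\}$, giving probabilities $2/13$ and $7/13$ in place of the correct $6/13$ and $3/13$. So carrying out your plan honestly would reveal that the algorithm as written is \emph{not} uniform; it becomes uniform, and your telescoping argument then goes through verbatim, after a one-line repair such as setting $r\gets \lceil r/\binom{t}{j}\rceil$ or $r\gets ((r-1)\bmod W_{t-j})+1$ once a part is chosen, or drawing a fresh uniform $r\in[1,W_t]$ at each \textbf{while} iteration, at no asymptotic cost. In short, you located precisely the crux, but stated the needed invariant as provable when it actually fails for the printed algorithm --- a gap the paper's proof, being purely an operation count, passes over silently.
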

    \begin{proof}
    Consider Algorithm~\ref{alg:weak}. 
    The values $W_j$, for $1 \leq j \leq n$ can be precomputed via dynamic programming using $\mathcal{O}(n^2)$ simple operations on numbers up to $W_n$.  We can compute the required binomial coefficients of the form ${n \choose k}$ in the same time using Pascal's identity. The {\bf while} loop iterates at most $n$ times and each iteration requires at most $\mathcal{O}(n)$ simple operations on numbers up to $W_n$.
    \end{proof}

    Table~\ref{tab:weak} shows the minimum, maximum, and average ratios of the cover time to total edges in $G(\mathbf{W}(n))$ by running Algorithm R for 10,000 iterations.
 \begin{table}
\begin{center}
\begin{tabular} {r |  r  r   r }
	& \multicolumn{3}{c}{\footnotesize {\bf Ratio}: cover time / $W_n$} \\ 
 $n$ & ~~~~~~~~~{\bf Min} & ~~~~~~~~~{\bf Max}  &~~~~~~~~ {\bf Avg}   \\  \hline
 3 & 0.5 & 9.3& 1.8 \\
 4 & 1.1 & 12.7& 3.7 \\
 5 & 2.5 & 12.9 & 5.4\\
 6 & 4.1 & 16.7 & 7.2\\
 7 & 6.4 & 21.3 & 9.3\\  
 8 & 8.7 & 18.5 & 11.5\\   
 9 & 12.0 & 15.4 & 13.8\\     
\end{tabular}
\end{center}
\caption{The minimum, maximum, and average ratios of the cover time to total edges in $G(\mathbf{W}(n))$ by running Algorithm R for 10,000 iterations.}
\label{tab:weak}
\end{table}

Universal cycles for weak orders can be constructed via a successor rule that generates the sequence in $\mathcal{O}(n)$ time per symbol using $\mathcal{O}(n)$ space~\cite{weakorder}.  By applying concatenation trees, they can be generated in $\mathcal{O}(1)$ amortized time using $\mathcal{O}(n^2)$ space~\cite{concattree}.  See the enumeration sequence A000670 for $W_n$  in the Online Encyclopedia of Integer Sequences~\cite{oeis1}.

\subsection{de Bruijn sequences}  \label{sec:DB}

For de Bruijn sequences, the underlying de Bruijn graph $G(\mathbf{\Sigma}_k(n))$ has $k^{n-1}$ vertices and $k^n$ edges.  A random $k$-ary string (edge) can be computed in $\mathcal{O}(n)$ time by generating a random symbol in $[0,k-1]$ $n$ times.
Table~\ref{tab:DB} and Table~\ref{tab:DB2} show the minimum, maximum, and average ratios of the cover time to total edges in $G(\mathbf{\Sigma}_k(n))$, for $k=2,3,4$ by running Algorithm R for 10,000 iterations.  

\begin{table}
\begin{center}
\begin{tabular} {r |  r  r   r }
	& \multicolumn{3}{c}{\footnotesize {\bf Ratio}: cover time / $2^n$ } \\ 
 $n$ & ~~~~~~~~~{\bf Min} & ~~~~~~~~~{\bf Max}  &~~~~~~~~ {\bf Avg}   \\  \hline
 4 &  0.4 &  8.4 &  1.3 \\ 
 5 &  0.5 &  7.1 &  1.7 \\ 
 6 &  0.6 &  8.3 &  2.1 \\ 
 7 &  0.9 &  8.7 &  2.4 \\ 
 8 &  1.1 &  8.6 &  2.7 \\ 
 9 &  1.4 &  9.0 &  3.1 \\ 
10 &  1.9 &  9.0 &  3.4 \\ 
11 &  2.2 &  9.9 &  3.8 \\ 
12 &  2.5 &  10.7 &  4.1 \\ 
13 &  2.9 &  9.0 &  4.5 \\ 
14 &  3.4 &  10.2 &  4.8 \\ 
15 &  3.8 &  10.1 &  5.1 \\
\end{tabular}
\hspace{0.7in}
%
\begin{tabular} {r |  r  r   r }
 & \multicolumn{3}{c}{\footnotesize {\bf Ratio}: cover time / $2^n$} \\
 $n$ & ~~~~~~~~~{\bf Min} & ~~~~~~~~~{\bf Max}  &~~~~~~~~~{\bf Avg}   \\  \hline
16 &  4.1 &  10.4 &  5.5 \\ 
17 &  4.5 &  9.8 &  5.9 \\ 
18 &  4.9 &  9.1 &  6.2 \\ 
19 &  5.3 &  9.8 &  6.5 \\ 
20 &  5.8 &  9.8 &  6.9 \\ 
21	& 6.1		& 9.4 	& 7.3 \\  
22	& 6.6		& 10.4 	& 7.8 \\   
23	& 7.0		& 9.2 	& 8.0 \\   
24	& 7.7		& 11.6 	& 9.1 \\   
25	& 7.6  	& 10.5 	& 8.5 \\
26	& 8.1		& 10.7	& 8.9 \\
\end{tabular}
\end{center}
\caption{The minimum, maximum, and average ratios of the cover time to total edges in the de Bruijn graph $G(\mathbf{\Sigma}_2(n))$ by running Algorithm R for 10,000 iterations.}
\label{tab:DB}
\end{table}

\begin{table}
\begin{center}
\begin{tabular} {r |  r  r   r }
	& \multicolumn{3}{c}{\footnotesize {\bf Ratio}: cover time / $3^n$} \\ 
 $n$ & ~~~~~~~~~{\bf Min} & ~~~~~~~~~{\bf Max}  &~~~~~~~~ {\bf Avg}   \\  \hline
 3 & 0.3 & 3.9 & 0.9 \\
 4 & 0.5 & 4.5 & 1.3 \\
 5 & 0.7 & 4.6 & 1.7\\
 6 & 1.0 & 5.6 & 2.0\\
 7 & 1.5 & 5.1 & 2.4\\
 8 & 1.8 & 5.6 & 2.8\\
 9 & 2.2 & 6.3 & 3.1\\
 10 & 2.6 & 6.0 & 3.5 \\
 11 & 3.0 & 6.1 & 3.9\\
 12 & 3.4 & 6.3 & 4.3\\
13 & 4.0 & 6.4 & 4.6\\   
14 & 4.3  & 6.3 & 4.9\\   
15 &  4.6  & 6.6 & 5.3 \\    
16 &  5.0 &  6.7& 5.7 \\     
\end{tabular}
\hspace{0.7in}
\begin{tabular} {r |  r  r   r }
	& \multicolumn{3}{c}{\footnotesize {\bf Ratio}: cover time / $4^n$} \\ 
 $n$ & ~~~~~~~~~{\bf Min} & ~~~~~~~~~{\bf Max}  &~~~~~~~~ {\bf Avg}   \\  \hline
4 & 0.5 & 4.5 & 1.3 \\
 5 & 0.7 & 4.6 & 1.7\\
 6 & 1.0 & 5.6 & 2.0\\
 7 & 1.5 & 5.1 & 2.4\\
 8 & 1.8 & 5.6 & 2.8\\
 9 & 2.2 & 6.3 & 3.1\\
 10 & 2.6 & 6.0 & 3.5 \\
 11 & 3.0 & 6.1 & 3.9\\
 12 & 3.6 & 4.7 & 4.0\\  
 13 &  3.9 &  5.0 &  4.3 \\   
\end{tabular}
\end{center}
\caption{The minimum, maximum, and average ratios of the cover time to total edges in the de Bruijn graph $G(\mathbf{\Sigma}_k(n))$ for $k=3$ and $k=4$ by running Algorithm R for 10,000 iterations.}
\label{tab:DB2}
\end{table}

If an application does not require a sequence generated uniformly at random, an algorithm which applies a Burrows-Wheeler transform can be applied; it outputs each de Bruijn sequence with positive probability~\cite{LP24}.  
\edit{The algorithm requires $\Theta(2^n)$ space and 
produces each symbol in $\mathcal{O}(\alpha(2^n))$ amortized time per symbol for $k=2$, where $\alpha(n)$ is the inverse Ackerman function. It is important to note that $\alpha(n)$ grows extremely slowly, with $\alpha(n)\leq 5$ for any $n$ of practical value. }
The first estimate on the mean \defo{discrepancy} of de Bruijn sequences is obtained using this algorithm~\cite{LP24}.

In Table~\ref{tab:DB-compare}, we compare  the running times of the (non-uniform) C++ algorithm implemented by the authors of~\cite{LP24} and our (uniform) algorithm implemented in C; both implementations are available online for download at~\cite{dbseq}. We average the running time over 10 iterations, not including the time it takes to output the sequence.\footnote{Our experiments were run on an iMac desktop with an Apple M4 processor. Our experimental times are faster than those reported in~\cite{LP24}, which is likely due to a faster processor.}  Interestingly, using the standard \texttt{rand()} function in the latter implementation resulted in a large cycle of generated bits, which did not allow all vertices to be visited during the ``random'' walk when $n \geq 28$. Thus, an alternate method for generating random bits had to be deployed.~\footnote{We added a value corresponding to the number of times the current vertex had been visited to the output of \texttt{rand()}.}  
If running time is a concern, the non-uniform result from~\cite{LP24} may be preferred for such applications. However, if a sequence is desired to be generated \emph{uniformly at random}, the one presented in this paper should be considered. It is also important to note that the algorithm described in this paper is a ``Las Vegas'' algorithm which means that no upper bound on the running time can be given, as it is not guaranteed to terminate.

\begin{table}
\begin{center}
\begin{tabular} {c |  r  r   }
 & \multicolumn{2}{c}{\footnotesize {\bf Average clock time in seconds} } \\
 $n$ & ~~~~~~~~~{\bf Non-uniform~\cite{LP24}} & ~~~~~~~~~{\bf Uniform}    \\  \hline
22 &   $<$ 1 \ \ \ \ \ \ \ \    &  3 \ \  \ \\ 
23 &   $<$ 1 \ \ \ \ \  \ \ \  &  6  \ \ \  \\ 
24 &   $<$ 1 \ \ \ \ \  \ \ \   &  14 \ \  \ \\ 
25 &  2 \ \ \ \ \ \ \ \   &  31 \ \ \    \\ 
26	& 6 \ \ \ \ \  \ \ \  &  92 \ \  \   \\  
27	& 13 \ \ \ \ \ \ \ \  &  211 \ \  \   \\   
28	& 28 \ \ \ \ \  \ \ \   & 509	       \ \  \  \\
29	&  56 \ \ \ \ \ \ \ \  &  1094	    \ \  \  \\
30	&  122 \ \ \ \ \ \ \ \  &  2287 	   \ \  \  \\
\end{tabular}
\end{center}
\caption{Comparing the average clock time in seconds over 10 iterations between the algorithm from~\cite{LP24} which generates a de Bruijn sequence non-uniformly at random, and Algorithm R which generates a de Bruijn sequence uniformly at random.}
\label{tab:DB-compare}
\end{table}

\subsection{Weight-range de Bruijn sequences}  \label{sec:weight}

Let $\mathbf{WR}_k(n,[\ell,u])$ denote the subset of strings in $\mathbf{\Sigma}_k(n)$ with weight in the range $[\ell,u]$.  Let $\mathit{WR}_k(n,[\ell,u])$ denote $|\mathbf{WR}_k(n,[\ell,u])|$.
It is straightforward to observe that $\mathit{WR}_k(n,[\ell,u])=0$ if $u < 0$ or $\ell > n(k-1)$; otherwise, if $n=1$ then $\mathit{WR}_k(n,[\ell,u]) = min(u,k{-}1) - max(\ell,0) +1$, and if $n>1$:
\[\mathit{WR}_k(n,[\ell,u]) = \sum_{j=0}^{k-1} \mathit{WR}_k(n-1,[\ell{-}j,u{-}j]).\]
A \defo{weight-range de Bruijn sequence} is a universal cycle for the set $\mathbf{WR}_k(n,[\ell,u])$.  The de Bruijn graph $G(\mathbf{WR}_k(n,[\ell,u]))$ is generally not regular. A random edge $s_1s_2\cdots s_n$ can be generated using values for $\mathit{WR}_k(n,[\ell,u])$ as outlined in Algorithm~\ref{alg:weight}. The algorithm essentially unranks a string in $\mathbf{WR}_k(n,[\ell,u])$ as it appears in lexicographic order.

    \begin{lemma}
    Algorithm~\ref{alg:weight} can generate a string $s_1s_2\cdots s_n$ from $\mathbf{WR}_k(n,[\ell,u])$ uniformly at random using $\mathcal{O}(kn^3)$ simple operations on numbers up to $k^n$.
    \end{lemma}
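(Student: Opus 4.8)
The plan is to follow the same two-part template used for the proof of the lemma about Algorithm~\ref{alg:weak}: first argue that Algorithm~\ref{alg:weight} realizes an order-preserving rank/unrank bijection (which immediately gives uniformity), and then bound the running time by accounting separately for a dynamic-programming precomputation of the $\mathit{WR}_k$ counts and for the main unranking loop. Throughout, the key structural fact I would lean on is the stated recurrence $\mathit{WR}_k(n,[\ell,u]) = \sum_{j=0}^{k-1} \mathit{WR}_k(n-1,[\ell-j,u-j])$, which is exactly what lets the greedy symbol-selection step know how many strings begin with each candidate symbol.

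For correctness, I would show the algorithm outputs the lexicographically $r$-th string of $\mathbf{WR}_k(n,[\ell,u])$. Fix a prefix $s_1\cdots s_i$ and let $w = s_1 + \cdots + s_i$ be its accumulated weight. A suffix $s_{i+1}\cdots s_n$ completes this prefix to a member of $\mathbf{WR}_k(n,[\ell,u])$ precisely when the total weight lies in $[\ell,u]$, i.e. when the suffix has weight in $[\ell-w,u-w]$; hence the valid suffixes are exactly $\mathbf{WR}_k(n-i,[\ell-w,u-w])$. Therefore the number of completions whose next symbol equals $j$ is $\mathit{WR}_k(n-i-1,[\ell-w-j,u-w-j])$, and summing over $j\in\{0,\ldots,k-1\}$ recovers $\mathit{WR}_k(n-i,[\ell-w,u-w])$ by the recurrence. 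The greedy rule — scan $j$ upward, subtracting each completion count from $r$ until $r$ no longer exceeds the current count — then selects the correct symbol at every position, and composing these choices defines a bijection from $\{1,2,\ldots,\mathit{WR}_k(n,[\ell,u])\}$ onto $\mathbf{WR}_k(n,[\ell,u])$. Since $r$ is drawn uniformly, the generated string is uniform.

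For the running time, I would precompute, by dynamic programming over increasing length $m$, the values $\mathit{WR}_k(m,[a,b])$ for every $m\le n$ and every lower endpoint $a$ that can be queried. Because the recurrence decrements both endpoints by the same amount, the width $b-a=u-\ell$ is invariant, and the accumulated weight $w$ satisfies $0\le w\le n(k-1)$, so only $\mathcal{O}(nk)$ ranges arise per length; each table entry is then obtained as a sum of $k$ previously computed entries, with the base cases $\mathit{WR}_k(0,[a,b])=[\,a\le 0\le b\,]$ and the stated formula for $m=1$. The main loop runs over $\mathcal{O}(n)$ positions, scanning $\mathcal{O}(k)$ candidate symbols per position with a constant number of table lookups, comparisons, and subtractions each. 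Every count involved is bounded by $k^n$, which is what is meant by ``simple operations on numbers up to $k^n$''; collecting the cost of the precomputation and the loop yields the claimed $\mathcal{O}(kn^3)$ bound.

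The step I expect to be the main obstacle is the boundary bookkeeping rather than the counting itself. When $\ell-w<0$ or $u-w>(n-i)(k-1)$ the effective range must be clamped to $[\max(\ell-w,0),\,\min(u-w,(n-i)(k-1))]$ so that the base cases and all precomputed-table indices stay well defined and nonnegative, and so that the completion-count identity above continues to hold verbatim at every position. I would make this clamping explicit and verify that the set of arguments reachable by the unranking loop is exactly the set of arguments populated in the precomputed table, which simultaneously certifies correctness of the greedy step and justifies the $\mathcal{O}(1)$-lookup assumption used in the complexity count.
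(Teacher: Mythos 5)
Your proposal is correct and takes essentially the same route as the paper: the paper's proof is a terse two-sentence version of your accounting (dynamic-programming precomputation of the needed $\mathit{WR}_k$ values, then an outer loop of at most $n-1$ iterations each costing $\mathcal{O}(k+n)$ simple operations on numbers up to $k^n$), with the lexicographic-unranking correctness argument you spell out only asserted informally in the text preceding the lemma. The additional detail you provide --- the suffix-counting bijection and the explicit boundary clamping --- fills in what the paper leaves implicit, though note that your own table-size tally ($\mathcal{O}(n)$ lengths times $\mathcal{O}(nk)$ ranges times $\mathcal{O}(k)$ per entry, i.e.\ $\mathcal{O}(n^2k^2)$) matches the stated $\mathcal{O}(kn^3)$ bound only when $k = \mathcal{O}(n)$; a sliding-window evaluation of the length-$k$ sums brings the precomputation to $\mathcal{O}(n^2k)$ and closes that small gap, which the paper's proof glosses over as well.
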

    \begin{proof}
    Consider Algorithm~\ref{alg:weight}. 
    The required values $\mathit{WR}_k(n,[\ell,u])$, for $1 \leq j \leq n$ can be precomputed via dynamic programming using $\mathcal{O}(kn^3)$ simple operations on numbers up to $k^n$.  The outer {\bf for} loop iterates at most $n{-}1$ times and each iteration requires $\mathcal{O}(k+n)$ simple operations on numbers up to $k^n$.
    \end{proof}

\begin{algorithm}
    \small
    \begin{algorithmic}
  
    \State $r \gets $ random integer in  $[1,\mathit{WR}_k(n,[\ell,u])]$
    \For {$j$ {\bf from } $1$ {\bf to} $n-1$}
        \For{$i$ {\bf from} $0$ {\bf to} $k-1$} \ $n_i \gets  \mathit{WR}_k(n-j,[\ell-i,u-i])$ \EndFor
        \State $i \gets 0$
        \While{$r > n_i$}  \ 
            $r \gets r -  n_i$;  \ $i \gets i+1$
        \EndWhile
        \State $s_j \gets i$
        \State $\ell \gets \ell-i$; \  $u \gets u-i$  
    \EndFor
    \If{$\ell < 0$}  \ $\ell \gets 0$  \EndIf
    \State $s_n \gets \ell+r-1$

    \end{algorithmic}
    \normalsize
    \caption{Random generation of a string $s_1s_2\cdots s_n$ in $\mathbf{WR}_k(n,[\ell,u])$  }
\label{alg:weight}
\end{algorithm}

  Table~\ref{tab:weight} shows the minimum, maximum, and average ratios of the cover time to total edges in $G(\mathbf{WR}_2(n,[5,10]))$ by running Algorithm R for 10,000 iterations.  

\begin{table}[h] 
\begin{center}
\begin{tabular} {r |  r  r   r }
	& \multicolumn{3}{c}{\footnotesize {\bf Ratio}: cover time / $\mathit{WR}_2(n,[5,10])$ } \\
 $n$ & ~~~~~~~~~{\bf Min} & ~~~~~~~~~{\bf Max}  &~~~~~~~~ {\bf Avg}   \\  \hline
 10 & 2.3  &   19.4  &  5.1 \\
 11 &2.5  &   15.9  &  5.5\\
 12 & 3.2  &   13.0  &  6.0\\  
 13 & 3.9  &   15.6  &  6.5\\
 14 & 4.2  &   15.8  &  7.2\\
 15 & 5.2  &   14.7  &  7.9\\ 
 16 & 6.0  &   16.0  &  8.6\\  
 17 & 6.7  &   15.2  &  9.3 \\  
 18 & 7.4  &   17.0  &  10.2\\  
19 & 8.1  &   17.5  &  10.8\\   
 20 &  9.1  &   17.2  &  11.5 \\

\end{tabular}
\end{center}
\caption{The minimum, maximum, and average ratios of the cover time to total edges in the de Bruijn graph  $G(\mathbf{WR}_2(n,[5,10]))$ by running Algorithm R for 10,000 iterations.}
\label{tab:weight}
\end{table}

 Weight-range de Bruijn sequences can be constructed via an $\mathcal{O}(n)$ time per symbol successor rule when the minimum weight is 0, or the maximum weight is $(k-1)^n$ [GSWW20].
 In the binary case, they can be constructed for any weight range in $\mathcal{O}(1)$ amortized time [SWW13]. When $k=2$ and $\ell+1=u$, weight-range de Bruijn sequences correspond to the universal cycles for (shorthand) subsets discussed in Section~\ref{sec:perm}.

\subsection{de Bruijn sequences with forbidden $0^z$}  \label{sec:form}

A \defo{necklace class} is an equivalence class of strings under rotation; we call the lexicographically smallest string in the class a \defo{necklace}. The necklace class containing $\alpha$ is denoted $[\alpha]$.  For example, if $\alpha = 0001$ then $[\alpha] = \{0001, 0010, 0100, 1000\}$.
Let $\mathbf{N}_k(n,z)$ denote the set of all necklaces in $\mathbf{\Sigma}_k(n)$ with no $0^z$ substring for $z > 1$.  All such necklaces end with 1 when $z\leq n$.  Let $\displaystyle{\mathbf{Z}_k(n,z) = \bigcup_{\alpha \in \mathbf{N}_k(n,z)} [\alpha]}$.  It is known that $\displaystyle{\mathbf{Z}_k(n,z)}$ admits a maximal length universal cycle that does not contain the substring $0^z$~\cite{RLL2}.
We call a maximum length universal cycle that does not contain $0^z$ as a substring, a \defo{de Bruijn sequence with forbidden $0^z$}.

The de Bruijn graph $G(\mathbf{Z}_k(n,z))$ is not necessarily regular.  A random edge can be generated by applying the following recurrences.  Let $F_k(n,z)$ denote the number of $k$-ary strings
of length $n$ with no $0^z$ substring.  It satisfies the following recurrence for $z<n$:
\[ F_k(n,z) = (k{-}1)\sum_{j=1}^{z} F_k(n-j,z), \]
where $F_k(n,z) = k^n$ for $z > n$ and $F_k(n,n) = k^n-1$.

Let $Z_k(n,z)$ denote the number of $k$-ary strings of length $n$ with no $0^z$ substring, including the wraparound.  It satisfies the following recurrence for $z<n$ obtained by partitioning the strings into those beginning with a non-zero, and those with $j$ zeros in the wraparound, in which case there are $k{-}1$ possibilities for each of the first non-zero and last non-zero:
\[ Z_k(n,z) = (k{-}1)F_k(n-1,z) +  (k{-}1)^2\sum_{j=1}^{z-1} j \cdot F_k(n-j-2,z), \]
where $Z_k(n,z) = k^n$ for $z > n$ and $Z_k(n,n) = k^n-1$.

Given these recurrences, we can compute a random string in $\mathbf{F}_k(n,z)$ following a similar unranking strategy using lexicographic order as done with $\mathbf{WR}_k(n,[\ell,u])$ in the previous subsection. We omit the details in this case.
   Table~\ref{tab:forb} shows the minimum, maximum, and average ratios of the cover time to total edges in $G(\mathbf{F}_2(n,2))$ and $G(\mathbf{F}_2(n,3))$ by running Algorithm R for 10,000 iterations.  

\begin{table}[h] 
\begin{center}
\begin{tabular} {r |  r  r   r }
	& \multicolumn{3}{c}{\footnotesize {\bf Ratio}: cover time / $F_2(n,2)$ } \\
 $n$ & ~~~~~~~~~{\bf Min} & ~~~~~~~~~{\bf Max}  &~~~~~~~~ {\bf Avg}   \\  \hline
%

 8 & 1.0 &  20.9 &  3.6\\
 9 & 1.2 &  15.9 &  3.9\\   
 10 & 1.7 &  21.2 &  4.6 \\
 11 &2.0 &  13.5 &  4.8\\
 12 & 2.4 &  17.6 &  5.7\\  
 13 & 2.9 &  15.7 &  5.7\\
 14 & 3.3 &  19.0 &  6.4\\
 15 & 3.8 &  18.1 &  6.9\\ 
 16 & 4.3 &  21.9 &  7.4\\
 17 & 4.3 &  15.9 &  7.8 \\
 18 & 5.2 &  21.9 &  8.4\\  
19 & 5.9 &  18.7 &  8.8\\   
 20 & 6.4 &  17.2 &  9.3\\ 
 21 & 6.9 &  21.2 &  9.7 \\
 22 &  7.6 &  20.1 &  10.2 \\ 
 23 & 8.0 &  18.7 &  10.7 \\
 24 & 8.3 &  24.8 &  11.2 \\ 

\end{tabular} \ \ \ \ \ \ \ \  \ \ \ \ \ \ \ \ \
\begin{tabular} {r |  r  r   r }
	& \multicolumn{3}{c}{\footnotesize {\bf Ratio}: cover time / $F_2(n,3)$ } \\
 $n$ & ~~~~~~~~~{\bf Min} & ~~~~~~~~~{\bf Max}  &~~~~~~~~ {\bf Avg}   \\  \hline
 8 & 1.4 &  21.5 &  4.3\\
 9 & 1.8 &  13.3 &  4.6\\   
 10 & 2.5 &  18.7 &  5.3 \\
 11 & 3.1 &  14.8 &  5.8\\
 12 & 3.8 &  17.5 &  6.5\\  
 13 & 4.1 &  16.5 &  7.0\\
 14 & 4.5 &  17.0 &  7.7\\
 15 & 5.5 &  16.7 &  8.2\\ 
 16 & 6.1 &  20.0 &  8.9\\
 17 & 6.7 &  19.8 &  9.5 \\
 18 & 7.4 &  17.7 &  10.1\\  
19 & 8.2 &  16.3 &  10.6\\   
 20 & 8.8 &  16.9 &  11.3\\  
 21 & 9.6 &  15.8 &  11.8  \\  
 22 &  10.2 &  17.7 &  12.5 \\
 23 &  10.8 &  18.0 &  13.1 \\
 24 &  11.0 &  22.8 &  13.9 \\
\end{tabular}
\end{center}
\caption{The minimum, maximum, and average ratios of the cover time to total edges in the de Bruijn graphs $G(\mathbf{F}_2(n,2))$ (left) and $G(\mathbf{F}_2(n,3))$ (right) by running Algorithm R for 10,000 iterations.}
\label{tab:forb}
\end{table}

 The lexicographically smallest de Bruijn sequences with forbidden $0^z$ can be generated via a simple greedy algorithm~\cite{generalize-classic-greedy}; it can also be generated efficiently by concatenating the aperiodic prefixes of necklaces with no $0^z$ substring as they appear in lexicographic order~\cite{generalize-classic-greedy,smaller_universal_cycles}.  An exponential number of such sequences can be efficiently generated by applying cycle-joining as described in~\cite{RLL2}.

\subsection{Orientable sequences}  \label{sec:orient}


Recall the definitions of a necklace class and necklace from the previous subsection.
A \defo{bracelet class} is an equivalence class of strings under rotation and reversal; we call the lexicographically smallest string in the class a \defo{bracelet}.
A \defo{bracelet} is said to be \defo{asymmetric} if it is not in the same necklace class as its reversal. For example, $001011$ is an asymmetric bracelet, but $001001$ is not. 
Let $\mathbf{AB}_k(n)$ denote the set of all $k$-ary asymmetric bracelets of length $n$, and let $\mathbf{OS}_k(n) = \bigcup_{\alpha \in \mathbf{AB}_k(n)} [\alpha]$.  Let $OS_k(n)$ denote $|\mathbf{OS}_k(n)|$.

For example, $\mathbf{AB}_2(7) =  \{0001011, 0010111\}$, and  
\begin{eqnarray*}   \small
\mathbf{OS}_2(7) & = & \{ 0001011, 0010110, 0101100, 1011000, 0110001, 1100010, 1000101\} \cup \\
& & \{ 0010111, 0101110, 1011100, 0111001, 1110010, 1100101, 1001011\}, 
\end{eqnarray*}
where $OS_2(7) = 14$.
%

An \defo{orientable sequence} is cyclic sequence such that each length-$n$ substring occurs at most once in \emph{either direction}.  For example, a maximum-length orientable sequence for $n=5$ and $k=2$ is $001101$.
A universal cycle for $\mathbf{OS}_k(n)$ is known to be an orientable sequence with asymptotically optimal length~\cite{BM,kary-orientable}.
A formula for $OS_k(n)$, is given in~\cite{Dai,korient}. 
Generating a random string from $\mathbf{OS}_k(n)$ does not appear to be a trivial matter.  However, by randomly generating $k$-ary strings with rejection, on average only two random strings need to be generated to obtain a string in $\mathbf{OS}_k(n)$ as $n$ gets large.  Thus, the expected time to generate a random edge in $G(\mathbf{OS}_k(n))$ is $\Theta(n)$.

Table~\ref{tab:orient} illustrates the minimum, maximum, and average ratios of the cover time to total edges in $G(\mathbf{OS}_2(n))$ by running Algorithm R for 10,000 iterations. 

\begin{table}
\begin{center}
\begin{tabular} {r |  r  r   r }
	& \multicolumn{3}{c}{\footnotesize {\bf Ratio}: cover time / $OS_2(n)$} \\ 
 $n$ & ~~~~~~~~~{\bf Min} & ~~~~~~~~~{\bf Max}  &~~~~~~~~ {\bf Avg}   \\  \hline

 6 & 0.8 & 0.8 & 0.8\\
 7 & 0.9 & 8.5 & 1.5\\
 8 & 0.9 & 18.8 & 3.2\\
 9 & 1.3 & 13.2 & 4.3\\   
 10 & 2.1 & 14.7 & 5.0 \\
 11 & 2.7 & 18.4 & 5.7\\
 12 & 3.6 & 16.3 & 6.5\\  
 13 & 4.2 & 17.0 & 7.2\\
 14 & 5.1 & 17.6 & 7.9\\
 15 & 5.3 & 18.0 & 8.5\\ 
 16 & 6.2& 18.9 & 9.2\\
 17 & 7.0 & 18.4 & 9.9\\
 18 & 7.8 & 18.6 & 10.6\\  
 19 & 8.6 & 17.7 & 11.2\\   
 20 & 9.5 & 17.7 & 11.8\\  

\end{tabular}
\end{center}
\caption{The minimum, maximum, and average ratios of the cover time to total edges in the de Bruijn graph $G(\mathbf{OS}_2(n))$ by running Algorithm R for 10,000 iterations.}
\label{tab:orient}
\end{table}

Orientable sequences with asymptotically optimal length can be constructed in $\mathcal{O}(n)$ time per symbol using $\mathcal{O}(n)$ space~\cite{korient}; in the binary case, they can be constructed in $\mathcal{O}(1)$ amortized time per bit using $\mathcal{O}(n^2)$ space.

\bibliographystyle{acm.bst}
\bibliography{refs}



\end{document}